\newtheorem{theorem}{Theorem}[section]
\newtheorem{lemma}[theorem]{Lemma}
\newtheorem{proposition}[theorem]{Proposition}
\title{On Gr\"obner Basis Detection for Zero-dimensional Ideals}
\author[P. Ananth]{Prabhanjan Ananth}
\thanks{} 
\address{Dept. of Computer Science and Automation, Indian Institute of
Science}
\email{prabhanjan@csa.iisc.ernet.in}
\author[A. Dukkipati]{Ambedkar Dukkipati}
\thanks{} 
\address{Dept. of Computer Science and Automation, Indian Institute of
Science}
\email{ambedkar@csa.iisc.ernet.in}
\date{}
\begin{document}
\maketitle

\begin{abstract}
\noindent The Gr\"obner basis detection (GBD) is defined as follows: Given a set of polynomials, decide whether there exists -and if ``yes" find- a term order such that the set of polynomials is a Gr\"obner basis. This problem was shown to be NP-hard by Sturmfels and Wiegelmann. We show that GBD when studied in the context of zero dimensional ideals is also NP-hard. An algorithm to solve GBD for zero dimensional ideals is also proposed which runs in polynomial time if the number of indeterminates is a constant. 
\end{abstract}


\section{Preliminaries}
 
Consider the polynomial ring $P=K[x_1, \ldots ,x_n]$ where $K$ be a field. Let the set of all monomials be $\mathbb{T}^n=\{x_1^{\alpha_1} \cdots x_n^{\alpha_n}\ |\ (\alpha_1, \ldots ,\alpha_n) \in \mathbb{N}^{n}\}$. The leading term of a polynomial $f$ with respect to a term order is denoted by $lt(f)$. Let $\mathcal{F}$ be a set of polynomials. The set of leading terms of polynomials in $\mathcal{F}$ with respect to a term order $\prec$ is denoted by $LT_{\prec}(\mathcal{F})$. A pure power is a term which is of the form $x_i^\alpha$ for some $i \in \{1, \ldots ,n\}$ where $\alpha \in {\mathbb{N}}^{n}$. Any term order $\prec$ can be represented by a positive weight vector $w \in \mathbb{R}^{n}_{+}$ \textit{i.e.,}\\
$$X^{\alpha} \prec X^{\beta} \Leftrightarrow w^{T}\alpha < w^{T}\beta $$ 
A polynomial $f$ is said to be reduced to $h$ in one step by $g$ with respect to term order $\prec$, denoted by $f \xrightarrow{g} h$, if $lt(f)=lt(g)t$ and $h=f-tg$ for some term $t$. A polynomial $f$ is said to be reduced to $h$ by a set of polynomials $\mathcal{F}=\{f_1, \ldots ,f_s\}$ if $f \xrightarrow{f_{i_1}} h_1 \cdots \xrightarrow{f_{i_r}} h_r=h$ where $f_{i_1}, \ldots ,f_{i_r} \in \mathcal{F}$.\\

\noindent Let $f_1, \ldots ,f_s \in P$ and let $I=\langle f_1, \ldots , f_s \rangle$. Let $S$ be the system of polynomial equations

$$f_1(x_1, \ldots ,x_n) = 0$$
$$\vdots$$
$$f_s(x_1, \ldots ,x_n) = 0$$

\noindent The following result from commutative algebra will be useful later in our reductions.
\begin{proposition}
\label{zerodim}
Let $\sigma$ be a term ordering on $\mathbb{T}^n$. The following conditions are equivalent.\\
a) The system of equations $S$ has only finitely many solutions.\\
b) For every $i \in \{1, \ldots ,n\}$, there exists a number $\alpha_i \geq 0$ such that we have $x_i^{\alpha_i} \in LT_{\sigma}(I)$.\\
\end{proposition}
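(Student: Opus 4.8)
The plan is to prove each implication by exhibiting, for every $i$, a nonzero \emph{univariate} polynomial in $x_i$ lying in $I$. The bridge to the statement is the elementary remark that, for any term order $\sigma$, the leading term of a nonzero $g\in K[x_i]\subseteq P$ is its top power of $x_i$: since $1\prec x_i\prec x_i^2\prec\cdots$ under any term order (equivalently, $w^T(ke_i)=kw_i$ is increasing for a positive weight vector $w$), $lt(g)=x_i^{\deg g}$. Thus a nonzero $g\in I\cap K[x_i]$ immediately gives a pure power $x_i^{\deg g}\in LT_\sigma(I)$, and conversely the existence of such polynomials forces $\dim_K(P/I)<\infty$ and hence finiteness of the solution set. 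Throughout I read ``solution'' as a point of $\overline K^{\,n}$ (equivalently, since the dimension of $P/I$ is unchanged under extension of scalars, condition (a) is $\dim_K(P/I)<\infty$); for $K$-rational solutions the equivalence fails when $K$ is not algebraically closed, e.g.\ $I=\langle x_1^2+x_2^2+1\rangle\subseteq\mathbb{R}[x_1,x_2]$.

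For (b) $\Rightarrow$ (a): if $x_i^{\alpha_i}\in LT_\sigma(I)$ for every $i$, then, $LT_\sigma(I)$ being a monomial ideal, every monomial divisible by some $x_i^{\alpha_i}$ lies in it, so the monomials \emph{not} in $LT_\sigma(I)$ all lie in the finite box $\{x_1^{\beta_1}\cdots x_n^{\beta_n}:0\le\beta_i<\alpha_i\}$. By Macaulay's Basis Theorem (the residue classes of the monomials in $\mathbb{T}^n\setminus LT_\sigma(I)$ form a $K$-basis of $P/I$) this yields $\dim_K(P/I)\le\alpha_1\cdots\alpha_n<\infty$. Hence for each $i$ the classes of $1,x_i,x_i^2,\dots$ are $K$-linearly dependent in $P/I$, producing a nonzero $h_i\in I\cap K[x_i]$; every solution of $S$ then has $i$-th coordinate among the finitely many roots of $h_i$, so $S$ has finitely many solutions.

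For (a) $\Rightarrow$ (b): assume the solution set $V\subseteq\overline K^{\,n}$ of $S$ is finite and fix $i$. The set of $i$-th coordinates of points of $V$ is finite; let $G_i\in K[x_i]$ be the product of the distinct minimal polynomials over $K$ of these coordinates (the empty product $1$ if $V=\emptyset$). Then $G_i$ vanishes on $V$, so by Hilbert's Nullstellensatz $G_i^{\,m}\in I\,\overline K[x_1,\dots,x_n]$ for some $m\ge1$; since $\overline K$ is a free (hence flat) $K$-module, $I\,\overline K[x]\cap K[x]=I$, so $G_i^{\,m}\in I\cap K[x_i]$. By the opening remark $lt(G_i^{\,m})=x_i^{\,m\deg G_i}$, hence $x_i^{\,m\deg G_i}\in LT_\sigma(I)$, which is (b) (with $\alpha_i=0$ in the degenerate case $V=\emptyset$).

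The arguments are short once Macaulay's Basis Theorem and the Nullstellensatz are in hand; the points requiring care are the interpretation of ``solution'' over $\overline K$ and the identity $I\,\overline K[x]\cap K[x]=I$, and I expect the latter to be the only genuinely technical step in a self-contained write-up. If one prefers to avoid it, an alternative route for (a) $\Rightarrow$ (b) is to use the $G_i^{\,m}$ only to reduce every monomial modulo $I\,\overline K[x]$ to one in a fixed finite box, obtaining $\dim_{\overline K}\bigl(\overline K[x]/I\,\overline K[x]\bigr)<\infty$ and hence $\dim_K(P/I)<\infty$ (dimension is preserved by extension of scalars), and then to quote Macaulay's Basis Theorem to conclude that the infinitely many pure powers $x_i,x_i^2,\dots$ cannot all avoid $LT_\sigma(I)$.
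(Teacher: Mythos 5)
The paper does not prove this proposition at all---it simply cites Kreuzer--Robbiano \cite{CCA1}, where the statement is part of the Finiteness Criterion, and your argument is correct and essentially the standard proof given there: both directions pass through $\dim_K(P/I)<\infty$ via Macaulay's Basis Theorem together with nonzero univariate elements of $I\cap K[x_i]$ (whose leading terms are automatically pure powers). Your caveat that ``solution'' must be read over $\overline{K}$ (or that (a) be taken as $\dim_K(P/I)<\infty$) is genuinely necessary for (a) $\Rightarrow$ (b), as your example $\langle x_1^2+x_2^2+1\rangle\subseteq\mathbb{R}[x_1,x_2]$ shows, and the descent step $I\,\overline{K}[x]\cap K[x]=I$ is correctly justified by faithful flatness.
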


\noindent The proof of the above proposition can be found in~\cite{CCA1}. In the next section we prove the GBD result for zero-dimensional ideals and describe an algorithm to solve the problem.


\section{Gr\"obner basis detection for zero-dimensional ideals}

\noindent The Gr\"obner basis detection (GBD) problem was introduced by Gritzmann and Sturmfels in~\cite{MinkowskiAdd:GritzSturmfels} as an application of the Minkowski addition of polytopes. GBD is defined as follows:
\begin{quote}
\textbf{(GBD)} Given a set of polynomials $\mathcal{F}=\{f_1, \ldots ,f_s\}$, decide whether there exists -and if ``Yes" find- a term order $w \in \mathbb{R}^{n}_{+}$ such that $\mathcal{F}$ is a Gr\"obner basis with respect to $w$. 
\end{quote}

\noindent This problem was shown to be NP-hard by showing the NP-completeness of a variant of GBD called `Structural Gr\"obner basis detection' (SGBD). The SGBD is described as follows:

\begin{quote}
\textbf{(SGBD)} Given a set of polynomials $\mathcal{F}=\{f_1, \ldots ,f_s\}$, decide whether there exists -and if ``Yes" find- a term order $w \in \mathbb{R}^{n}_{+}$ such that $LT_{w}(\mathcal{F})$ is a set of pairwise coprime terms. 
\end{quote}

\noindent The main aim of this paper is to show that it is NP-hard to detect whether a set of polynomials is a Gr\"obner basis of a zero-dimensional ideal. The Gr\"obner basis detection zero-dimensional ideals is defined as follows.

\begin{quote}
\textbf{(GBD$_{0dim}$)} Given a set of polynomials $\mathcal{F}=\{f_1, \ldots ,f_s\}$, decide whether there exists- and if "Yes" find- a term order $w \in \mathbb{R}^{n}_{+}$ such that $\mathcal{F}$ is a Gr\"obner basis of a zero-dimensional ideal with respect to $w$.
\end{quote}

In order to show that GBD$_{0dim}$ is NP-hard, we define two problems HGBD$_m$ and HSGBD$_m$ which are variants of GBD and SGBD and determine their complexities.\\

\begin{quote}
\textbf{(HSGBD$_m$)} Given a set of homogenous polynomials $\mathcal{F}=\{f_1, \ldots ,f_s\}$ of constant degree $m$, decide whether there exists- and if "Yes" find- a term order $w \in \mathbb{R}^{n}_{+}$ such that $LT_{w}(\mathcal{F})$ is a set of pairwise coprime monomials.
\end{quote}

\begin{quote}
\textbf{(HGBD$_m$)} Given a set of homogenous polynomials $\mathcal{F}=\{f_1, \ldots ,f_s\}$ of constant degree $m$, decide whether there exists- and if "Yes" find- a term order $w \in \mathbb{R}^{n}_{+}$ such that $\mathcal{F}$ is a Gr\"obner basis with respect to $w$. 
\end{quote}

We first show that HSGBD$_m$ is NP-complete by a reduction from $m$-set packing. This reduction is obtained by a modification of the reduction from set packing to SGBD in~\cite{GBD}. The $m$-Set packing is described as follows.

\begin{quote}
\textbf{($m$-Set packing)} Given a family $S=\{S_1, \ldots ,S_k\}$ of subsets of $\{1, \ldots ,\nu\}$ such that all subsets have atmost $m$ elements, and a goal $c \in \mathbb{N}$. Are there $c$ pairwise disjoint sets in $S$?
\end{quote}

\noindent This problem is proved to be NP-complete (See, for example, in~\cite{CompIntract:Garey}). Without loss of generality, we can assume that there exists atleast two sets which are mutually disjoint.
\par We then show that HGBD$_m$ is NP-hard. We perform a polynomial time reduction from HGBD$_m$ to GBD$_{0dim}$ which will prove our result.


\subsection{Complexity}
\ \\ \textbf{Reduction from $m$-Set packing to Homogenous $m$-SGBD}:  The reduction from Set packing to SGBD is described in~\cite{GBD}. We tweak their approach to show that even homogenous SGBD  \textit{i.e.,} $\mathrm{HSGBD}_m$ is NP-complete. The modified reduction is described below.
\par Let $(\nu,S,c)$ be an instance of $(m-1)$-Set packing problem, we construct an instance of $\mathrm{HSGBD}_m$ as follows. Consider the polynomial ring\\
$$K[X_1, \ldots ,X_\nu,Y_{11}, \ldots ,Y_{1k}, \ldots ,Y_{c1}, \ldots , Y_{ck}]$$
in $\nu+ck$ variables, and we encode $S_j$ by the monomial $M_j=\prod_{i \in S_j}X_i$. Then we define $c$ polynomials
$$f_1=\sum_{j=1}^{k} Y_{1j}^{\alpha_{1j}}M_j, \ldots ,f_c=\sum_{j=1}^{k} Y_{cj}^{\alpha_{cj}}M_j,$$
where $\alpha_{ij}=(m-1)+1-\mathrm{deg}(M_j)$. Note that all the terms in the polynomials $f_1, \ldots ,f_s$ are of degree exactly $m$. Also, $\mathrm{deg}(M_j)$ is atmost $m-1$ and hence the exponent of $Y_{ij}$ is nonzero.\\ 
 
\begin{lemma}
$\mathcal{F}=\{f_1, \ldots ,f_m\}$ is a structural Gr\"obner basis if and only if $(\nu,S,c)$ is a ``Yes"-instance of the Set packing problem.
\end{lemma}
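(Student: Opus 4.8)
The plan is to establish the two implications separately; the core of the ``if'' direction is an explicit construction of a suitable weight vector, while the ``only if'' direction reduces to a short coprimality argument. Throughout I would assume, without loss of generality, that no $S_j$ is empty: empty members of the family can be stripped off while decreasing the goal $c$ accordingly, since each of them may always be adjoined to a packing for free.

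For the ``if'' direction, suppose $S_{j_1},\dots,S_{j_c}$ are pairwise disjoint members of $S$; by the standing assumption the indices $j_1,\dots,j_c$ are distinct. I would build a weight vector $w$ by giving each of the $c$ variables $Y_{1 j_1},\dots,Y_{c j_c}$ a common large weight $N$ and every other indeterminate weight $1$. Since $f_i$ is homogeneous of degree $m$ and the variable $Y_{i j_i}$ occurs only in the single term $Y_{i j_i}^{\alpha_{i j_i}}M_{j_i}$ of $f_i$, with exponent $\alpha_{i j_i}\geq 1$, for $N$ large enough that term is the strict $w$-maximum among the terms of $f_i$; passing to a generic small perturbation of $w$ (with $\mathbb{Q}$-linearly independent entries) turns $w$ into an honest term order without disturbing any of these $c$ strict maxima. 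It then remains to verify that $LT_w(\mathcal{F})=\{\,Y_{i j_i}^{\alpha_{i j_i}}M_{j_i}\ :\ i=1,\dots,c\,\}$ is pairwise coprime: for $i\neq i'$ the two monomials share no $Y$-variable because the $Y$-variables are partitioned by their first index, and they share no $X$-variable because $S_{j_i}\cap S_{j_{i'}}=\emptyset$.

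For the ``only if'' direction, suppose $w$ is a term order for which $LT_w(\mathcal{F})$ is pairwise coprime. Each $lt_w(f_i)$ is necessarily one of the terms of $f_i$, hence of the form $Y_{i j_i}^{\alpha_{i j_i}}M_{j_i}$ for some index $j_i$. Pairwise coprimality of $lt_w(f_i)$ and $lt_w(f_{i'})$ for $i\neq i'$ then forces $M_{j_i}$ and $M_{j_{i'}}$ to have no common variable, i.e.\ $S_{j_i}\cap S_{j_{i'}}=\emptyset$, and (using again that no $S_j$ is empty) it forces $j_i\neq j_{i'}$. Thus $S_{j_1},\dots,S_{j_c}$ are $c$ pairwise disjoint members of $S$, so $(\nu,S,c)$ is a ``Yes''-instance.

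The step I expect to require the most care is the one turning the explicit weight vector into a bona fide term order: one must check that a single perturbation can be chosen small enough to preserve all $c$ strict leading terms at once, and that the degenerate cases (empty or repeated sets) are genuinely covered by the preliminary normalization. Everything else is bookkeeping about which indeterminates occur in which monomial, once one recalls that being a \emph{structural} Gr\"obner basis means precisely that some term order makes $LT_w(\mathcal{F})$ pairwise coprime.
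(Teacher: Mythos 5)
Your proposal is correct and follows essentially the same route as the paper: the ``only if'' direction reads off the leading terms as $Y_{ij_i}^{\alpha_{ij_i}}M_{j_i}$ and deduces disjointness of the corresponding sets from coprimality, and the ``if'' direction constructs the same kind of weight vector (the paper uses weight $m+1$ on the distinguished $Y$-variables where you use a large $N$). Your added care about empty sets and about perturbing the weight vector into a genuine term order are refinements the paper omits, but they do not change the argument.
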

\begin{proof}
Let $\mathcal{F}=\{f_1, \ldots ,f_m\}$ be a structural Gr\"obner basis with leading terms $Y_{1{i_1}}^{\alpha_{1{i_1}}}M_i, \ldots ,Y_{m{i_m}}^{\alpha_{m{i_m}}}M_{m{i_m}}$. Then $M_{i_1}, \ldots ,M_{i_m}$ must have disjoint support, and the $m$ sets $S_{i_1}, \ldots ,S_{i_m}$ are disjoint. 
\par Let $S_{i_1}, \ldots ,S_{i_m}$ be disjoint subsets of $\{1, \ldots ,\nu\}$ in $S$. A weight vector $w \in \mathbb{N}^{\nu+mk}$ is defined to be 1 for all indeterminates except for $Y_{1{i_1}}, \ldots ,Y_{m{i_m}}$, which get weight $m+1$. Then the leading terms of $f_1, \ldots ,f_m$ with respect to $w$ are $Y_{1{i_1}}^{\alpha_{1{i_1}}}M_i, \ldots ,Y_{m{i_m}}^{\alpha_{m{i_m}}}M_{m{i_m}}$. Since they are pairwise coprime, $\mathcal{F}$ is a structural Gr\"obner basis with respect to $w$. The proof is complete.
\end{proof}

SGBD was used in~\cite{GBD} to show that GBD was NP-hard. The same proof also shows that $\mathrm{HGBD}_m$ is NP-hard. For completeness sake, we reproduce the proof here.

\begin{lemma}
HGBD is NP-hard.
\end{lemma}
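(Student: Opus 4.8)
The plan is to mirror the reduction from SGBD to GBD of~\cite{GBD}, adapted to homogeneous polynomials of bounded degree. Concretely, I reduce from $(m-1)$-Set packing, re-using the very system $\mathcal F=\{f_1,\dots,f_c\}$ with $f_i=\sum_{j=1}^{k}Y_{ij}^{\alpha_{ij}}M_j$ that was built from an instance $(\nu,S,c)$ in the proof of the previous lemma. That construction is computable in polynomial time and $(m-1)$-Set packing is NP-complete, so it suffices to prove that $\mathcal F$ is a Gr\"obner basis with respect to some term order $w\in\mathbb R^{n}_{+}$ if and only if $(\nu,S,c)$ is a Yes-instance.

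For the ``if'' direction I appeal to Buchberger's coprimality criterion. If $(\nu,S,c)$ is a Yes-instance, the previous lemma produces a weight vector $w$ for which $LT_w(\mathcal F)$ is a set of $c$ pairwise coprime monomials: the $Y$-parts of the chosen leading terms involve distinct indeterminates and the $M$-parts have pairwise disjoint supports. All the $f_i$ have leading coefficient $1$, so for every pair the $S$-polynomial $S(f_a,f_b)$ reduces to $0$ modulo $\{f_a,f_b\}$; hence every $S$-polynomial reduces to $0$ and $\mathcal F$ is a Gr\"obner basis with respect to $w$.

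For the ``only if'' direction, suppose $\mathcal F$ is a Gr\"obner basis with respect to some $w$ and write $lt(f_i)=Y_{i,p_i}^{\alpha_{i,p_i}}M_{p_i}$. It suffices to show that $S_{p_1},\dots,S_{p_c}$ are pairwise disjoint (necessarily with distinct indices, as every $S_j$ is nonempty), for then $(\nu,S,c)$ is a Yes-instance. Assume toward a contradiction that $S_{p_a}\cap S_{p_b}\neq\emptyset$ for some $a\neq b$; equivalently $G:=\gcd(lt(f_a),lt(f_b))\neq 1$, and since the indeterminates $Y_{ij}$ with distinct first index occur in distinct $f_i$, the monomial $G$ is square-free in the $X_i$ and supported exactly on $S_{p_a}\cap S_{p_b}$. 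I would then show $S(f_a,f_b)$ does not reduce to $0$ modulo $\mathcal F$, contradicting Buchberger's criterion. The key observation is that once the leading terms cancel, every monomial of $S(f_a,f_b)$ --- and, by induction along the division algorithm, of every intermediate polynomial obtained while reducing it by $\mathcal F$ --- contains exactly one $Y$-indeterminate from the $a$-th group and one from the $b$-th group and none from any other group; hence only $f_a$ and $f_b$ can ever be used as reducers, and such a monomial is divisible by $lt(f_a)$ only if its $a$-th group index equals $p_a$, and symmetrically for $lt(f_b)$. Using the standing hypothesis that $S$ contains two disjoint sets, one picks an index $j_0\notin\{p_a,p_b\}$ with $G\nmid M_{j_0}$; then the monomial $Y_{a,j_0}^{\alpha_{a,j_0}}Y_{b,p_b}^{\alpha_{b,p_b}}\,\frac{M_{p_b}M_{j_0}}{G}$ appears in $S(f_a,f_b)$, is divisible by no $lt(f_i)$ (divisibility by $lt(f_b)$ fails precisely because $G\nmid M_{j_0}$), and cannot be cancelled by the reductants of the remaining monomials of $S(f_a,f_b)$, all of which have $b$-th group index different from $p_b$; so it survives in the normal form, which is therefore nonzero.

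The step I expect to be the real obstacle is the last one: carefully establishing the $Y$-signature invariant throughout the division, pinning down which of $f_a,f_b$ can act on each monomial, and verifying that the surviving monomial $Y_{a,j_0}^{\alpha_{a,j_0}}Y_{b,p_b}^{\alpha_{b,p_b}}M_{p_b}M_{j_0}/G$ is genuinely uncancelled. The other ingredients --- the coprimality criterion, the existence of $j_0$ from the two-disjoint-sets hypothesis, and the polynomial-time bound on the construction --- are routine.
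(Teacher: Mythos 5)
Your reduction and your ``if'' direction coincide with the paper's. Where you genuinely diverge is the heart of the ``only if'' direction. Both you and the paper first argue that no $f_k$ with $k\notin\{a,b\}$ can participate in reducing $S(f_a,f_b)$, because every intermediate monomial carries exactly one $Y$-indeterminate from group $a$ and one from group $b$. But from that point the paper takes a shortcut you do not: it observes that $S(f_a,f_b)$ therefore reduces to zero modulo $\{f_a,f_b\}$ alone, and then invokes the known characterization of two-element Gr\"obner bases (Lemma 3.3.1 of Adams--Loustaunau, as used by Sturmfels--Wiegelmann): if $\{f,g\}$ is a Gr\"obner basis then $lt(f/d)$ and $lt(g/d)$ are relatively prime for $d=\gcd(f,g)$; since $\gcd(f_a,f_b)=1$ here, the leading terms are coprime and one is done. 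Your plan replaces this citation with a bare-hands construction of a witness monomial that survives the reduction. That is more self-contained, and it is the only place where the two-disjoint-sets hypothesis and the choice of $j_0$ enter; the paper's route needs neither.

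There is, however, a concrete error in the one justification you do commit to for the uncancelled-witness claim. You assert that the remaining monomials of $S(f_a,f_b)$ ``all have $b$-th group index different from $p_b$.'' They do not: writing $L=\mathrm{lcm}(lt(f_a),lt(f_b))$, the entire block $\frac{L}{lt(f_a)}f_a$ contributes monomials $Y_{a j}^{\alpha_{aj}}Y_{b p_b}^{\alpha_{b p_b}}M_{p_b}M_j/G$, every one of which has $b$-th group index exactly $p_b$ --- your witness $u_0$ is itself one of them. The correct argument is different: a reduction step by $f_b$ introduces only monomials whose group-$b$ variable is $Y_{bj}$ with $j\neq p_b$, so it can never create or cancel $u_0$; a reduction step by $f_a$ creates $u_0$ only when the monomial being rewritten is $L$ itself, and one then checks (using the invariant that the group-$a$ exponent of every intermediate monomial is exactly some $\alpha_{aj}$) that $L$ never reappears in any intermediate polynomial, since every monomial introduced by a reduction has group-$a$ index $\neq p_a$ or group-$b$ index $\neq p_b$. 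With that repair, $u_0$ persists with its original coefficient, is divisible by no $lt(f_i)$, and witnesses a nonzero normal form; so your strategy does close, but not for the reason you gave. Given that the paper's citation of the two-polynomial criterion makes all of this bookkeeping unnecessary, you should either adopt that shortcut or carry out the corrected cancellation analysis in full.
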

\begin{proof}
Consider $\mathcal{F}$ to be the set of polynomials which is the output of the reduction from $m$-set packing to $m$-HSGBD.
\par Assume that there exists a term order $\prec$ such that all the leading terms of the polynomials of $\mathcal{F}$ are mutually coprime. This implies that $\mathcal{F}$ is a Gr\"obner basis with respect to $\prec$.
\par Assume that $\mathcal{F}$ is a Gr\"obner basis with respect to the term order $\prec$. Then, it needs to be shown that $lt(f_i)$ and $lt(f_j)$ are coprime for all $i$ and $j$. The $S$-polynomial of any two polynomials $f$ and $g$ reduces to zero with respect to $\mathcal{F}$. Any polynomial $f_k$ for $k \neq \{i,j\}$ involves a variable $Y_{lk}$ in its leading term and hence it does not participate in the reduction of $S(f_i,f_j)$. Thus $S(f_i,f_j)$ reduces to zero by $\{f_i,f_j\}$ only. Hence from Lemma 3.3.1 in \cite{adams1994introduction}, $lt(\frac{f}{d})$ and $lt(\frac{g}{d})$ are relatively prime where $d=\mathrm{gcd}(f,g)$. But since $\mathrm{gcd}(f,g)=1$, $lt(f)$ and $lt(g)$ are mutually coprime.
\end{proof}

\noindent \textbf{Reduction from $\mathrm{HGBD}_m$ to $\mathrm{GBD}_{0dim}$}:\\
Let $\mathcal{F}$ be the input to the homogenous $m$-SGBD. We will construct $\mathcal{F}'$ as follows. Let $\mathcal{G}= \{t\ |\ \mathrm{deg}(t)=2m+1\}$. Then,
$$\mathcal{F}'=\mathcal{F} \cup \mathcal{G}$$

\begin{theorem}
$\mathcal{F}$ is a Gr\"obner basis with respect to term order $\prec$ iff $\mathcal{F}'$ is a Gr\"obner basis of a zero-dimensional ideal with respect to $\prec$. 
\end{theorem}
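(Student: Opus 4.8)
The plan is to prove both implications via Buchberger's criterion, exploiting two structural features of the construction. First, $\langle\mathcal{G}\rangle$ contains every monomial of degree at least $2m+1$, in particular each pure power $x_i^{2m+1}$, so $x_i^{2m+1}\in\langle\mathcal{F}'\rangle$ for every $i$; since $x_i^{2m+1}$ is its own leading term, $x_i^{2m+1}\in LT_{\prec}(\langle\mathcal{F}'\rangle)$ for every $i$ and every $\prec$, whence $\langle\mathcal{F}'\rangle$ is zero-dimensional by Proposition \ref{zerodim}, irrespective of $\prec$. Thus the phrase ``of a zero-dimensional ideal'' is automatic, and the theorem reduces to: $\mathcal{F}$ is a Gr\"obner basis with respect to $\prec$ if and only if $\mathcal{F}'$ is. Second, the elements of $\mathcal{F}$ are homogeneous of degree $m$, so for $f_i,f_j\in\mathcal{F}$ the $S$-polynomial $S(f_i,f_j)$ is homogeneous of degree $\deg\operatorname{lcm}(lt(f_i),lt(f_j))\le 2m$, strictly below the common degree $2m+1$ of the elements of $\mathcal{G}$.

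For the forward direction, assume $\mathcal{F}$ is a Gr\"obner basis and check the $S$-pairs of $\mathcal{F}'$. Pairs inside $\mathcal{F}$ reduce to $0$ modulo $\mathcal{F}\subseteq\mathcal{F}'$ by hypothesis, and pairs inside $\mathcal{G}$ are $S$-polynomials of monomials, hence vanish. For a mixed pair $f\in\mathcal{F}$, $g\in\mathcal{G}$, put $L=\operatorname{lcm}(lt(f),g)$; since $g\mid L$ we have $\deg L\ge 2m+1$ and $S(f,g)=\frac{L}{lt(f)}f-L$, which, using homogeneity of $f$, is a homogeneous polynomial of degree $\deg L\ge 2m+1$. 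Every monomial of degree at least $2m+1$ is divisible by some element of $\mathcal{G}$, so the leading term of $S(f,g)$ can be cancelled by a one-step reduction using $\mathcal{G}$, the result is again homogeneous of degree $\deg L$, and iterating shows $S(f,g)$ reduces to $0$ modulo $\mathcal{G}$. By Buchberger's criterion $\mathcal{F}'$ is a Gr\"obner basis, and it is zero-dimensional by the first paragraph.

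For the converse, assume $\mathcal{F}'$ is a Gr\"obner basis and fix $f_i,f_j\in\mathcal{F}$. The polynomial $S(f_i,f_j)$ is homogeneous of degree at most $2m$, and a one-step reduction by a homogeneous element of $\mathcal{F}$ sends a homogeneous polynomial to a homogeneous polynomial of the same degree. Since every element of $\mathcal{G}$ has degree $2m+1$, it cannot divide the leading term of any polynomial occurring while reducing $S(f_i,f_j)$; hence every reduction of $S(f_i,f_j)$ modulo $\mathcal{F}'$ is in fact a reduction modulo $\mathcal{F}$. As $\mathcal{F}'$ is a Gr\"obner basis, $S(f_i,f_j)$ reduces to $0$ modulo $\mathcal{F}'$, and therefore modulo $\mathcal{F}$; Buchberger's criterion then shows $\mathcal{F}$ is a Gr\"obner basis.

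I expect the crux to be the converse: one must argue that a reduction of $S(f_i,f_j)$ against the enlarged set $\mathcal{F}'$ can never actually invoke an element of $\mathcal{G}$, which relies exactly on the homogeneity of $\mathcal{F}$ together with the degree gap $2m<2m+1$ engineered into $\mathcal{G}$; the mixed-pair case of the forward direction is the other place the homogeneity hypothesis is genuinely used. It is also worth recording that $\mathcal{G}$ is finite — there are only finitely many monomials of degree $2m+1$ — so for fixed $m$ the assignment $\mathcal{F}\mapsto\mathcal{F}'$ is polynomial-time computable, which is what makes it a legitimate reduction.
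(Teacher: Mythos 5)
Your proof is correct and follows essentially the same route as the paper: Buchberger's criterion with the same three-case split for the forward direction, and for the converse the same degree-gap argument showing that no reduction of $S(f_i,f_j)$ can ever invoke an element of $\mathcal{G}$. The only cosmetic difference is that you track homogeneity of the intermediate polynomials directly where the paper runs an induction bounding the degrees of the terms of each $h_l$ by $2m$; the content is the same.
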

\begin{proof}
Suppose $\mathcal{F}$ is a Gr\"obner basis then we show that $\mathcal{F}'$ is a Gr\"obner basis. For that we show that for any two polynomials $f,g \in \mathcal{F}'$, $S(f,g) \xrightarrow{\mathcal{F}'} 0$. \\
Case (i) $f,g \in \mathcal{F}$: Since $\mathcal{F}$ is a Gr\"obner basis \textit{w.r.t} $\prec$, $S(f,g) \xrightarrow{\mathcal{F}} 0$ which implies that $S(f,g) \xrightarrow{\mathcal{F}'} 0$.\\
Case (ii) $f \in \mathcal{G}, g \in \mathcal{G}$: Since $f,g$ are just monomials, we have $S(f,g)=0$.\\
Case (iii) $f \in \mathcal{F}, g \in \mathcal{G}$: Observe that the degree of $\mathrm{lcm}$ of two terms is greater than or equal to the maximum of the degrees of the two terms. Hence, $\mathrm{lcm}(lt(f),lt(g)) \geq 2m+1$. And so, the degree of $\frac{\mathrm{lcm}(lt(f),lt(g))}{lt(f)}$ is greater than or equal to $m+1$. Consequently, the degree of all the terms in ${\frac{\mathrm{lcm}(lt(f),lt(g))}{lt(f)}}f$ is greater than or equal to $2m+1$. Now, consider $S(f,g)$:
$$S(f,g)={\frac{\mathrm{lcm}(lt(f),lt(g))}{lt(f)}}f+{\frac{\mathrm{lcm}(lt(f),lt(g))}{lt(f)}}g$$
As argued earlier all the terms in the first part of the above sum have degree atleast $2m+1$ and the second part is a term of degree atleast $2m+1$. And hence, all the terms in $S(f,g)$ have degree atleast $2m+1$. It can be observed that $S(f,g)$ can be reduced by the polynomials in $\mathcal{G}$ \textit{i.e.,} $S(f,g) \xrightarrow{\mathcal{G}} 0$. Hence, $\mathcal{F}$ is a Gr\"obner basis \textit{w.r.t} $\prec$. Since $x_1^{2m+1}, \ldots ,x_n^{2m+1} \in \mathcal{G}$ and hence in $LT(\mathcal{F}')$, $\mathcal{F}$ is a Gr\"obner basis of a zero-dimensional ideal.\\     

\noindent Suppose that $\mathcal{F}'$ is a Gr\"obner basis. If we show that for any pair of polynomials $f,g$ in $\mathcal{F}$, $S(f,g) \xrightarrow{\mathcal{F}} 0$ then it would be imply that $\mathcal{F}$ is a Gr\"obner basis. Let $f,g \in \mathcal{F}$. The lcm of two terms divides the product of those two terms. Hence, the degree of lcm of two terms is atmost the  sum of degrees of the two terms. This implies that $\mathrm{lcm}(lt(f),lt(g))$ is atmost $2m$. Hence, degree of $\frac{\mathrm{lcm}(lt(f),lt(g))}{lt(g)}$ and $\frac{\mathrm{lcm}(lt(f),lt(g))}{lt(f)}$ is atmost $m$. Consequently, total degree of all terms in ${\frac{\mathrm{lcm}(lt(f),lt(g))}{lt(g)}}f$ and ${\frac{\mathrm{lcm}(lt(f),lt(g))}{lt(f)}}g$ is atmost $2m$. Hence, total degree of all the terms in $S(f,g)$ is atmost $2m$. The following claim proves that $S(f,g)$ can be reduced to zero only by the polynomials in $\mathcal{F}$. \\

\noindent \textbf{Claim}. $S(f,g) \xrightarrow{g_1} h_1 \xrightarrow h_2 \xrightarrow{g_3} \cdots \xrightarrow{g_r} h_r=0$ where $g_{i} \in \mathcal{F}$. Then, $h_i$ contains terms of degree atmost $2m$.\\
\textit{Proof}. We prove it by induction on the number of reduction steps. The assertion is true when $S(f,g)$ is reduced to zero in one step. Assume that after $l$ number of reduction steps, all the terms in $h_l$ have degree atmost $2m$. Now consider the ${(l+1)}^{th}$ reduction step which is $h_l \xrightarrow{g_{l+1}} h_{l+1}$. $h_l$ can be reduced only by polynomials in $\mathcal{F}$ and hence $g_{l+1} \in \mathcal{F}$. If $lt(h_l)=t.lt(g_{l+1})$ then $\mathrm{deg}(t)=m$ and hence, $h_{l+1}=h_l-t.g_{l+1}$ contains only terms of degree atmost $2m$ in it's support.\\

\noindent From the above claim the $S$-polynomials of any two polynomials in $\mathcal{F}$ have to be reduced by polynomials in $\mathcal{F}$ since the support of all polynomials in $\mathcal{G}$ have degree at least $2m+1$. Also, we know that $S(f,g) \xrightarrow{\mathcal{F}'} 0$. This proves that $\mathcal{F}$ is a Gr\"oner basis with respect to $\prec$.
  
\end{proof}


\subsection{Algorithm}
Consider the polynomial ring $K[x_1, \ldots ,x_n]$. Let $\mathcal{F}=\{f_1, \ldots ,f_s\}$ be the input set of polynomials to the Gr\"obner basis detection for zero dimensional ideals. If $\mathcal{F}$ is a Gr\"obner basis of a zero dimensional ideal then with respect to each $i \in \{1, \ldots ,n\}$ there exists a polynomial $f_{j_i}$ where  $j_i \in \{1, \ldots ,s\}$ such that $lt(f_{j_i})={x_i}^{\alpha_{j_i}}$. Hence, we need to find a term order $\prec$ such that the above property holds. 
\par Let $\mathcal{F}=\mathcal{F}_1 \cup \mathcal{F}_2$ such that $\mathcal{F}_1$ be the set of polynomials where each polynomial contains at least one pure power in its support  and $\mathcal{F}_2$ be the set of polynomials where each polynomial does not contain any pure power in its support. Let $f_i \in \mathcal{F}_1$ be written as:\\
$$f_i={g_i}+{h_i}$$
such that $g_i$ is a polynomial containing only pure powers in its support and $h_i$ is a polynomial containing no pure power in its support such that if $t \in \mathrm{Supp}(g_i)$ then $t$ is of the form $t={x_j}^{\alpha_{ji}}$ for some $j \in \{1, \ldots ,n\}$ and for all $j$, ${x_j}^{\alpha_{ji}} \notin \mathrm{Supp}(h_i)$. Let the set of all $g_i$'s be ${\mathcal{G}_1}=\{{g_1}, \ldots ,{g_r}\}$ such that none of $g_i$ is nonzero. We can safely assume that $r \geq n$ since if $r < n$ then this violates the condition mentioned in Definition \ref{zerodim} and hence, $\mathcal{F}$ cannot be a Gr\"obner basis of a zero dimensional ideal.\\ 

\noindent \textit{Algorithm}:\\
\textbf{Step 1}: Consider a $n$-subset of ${\mathcal{F}_1}$. Consider the corresponding subset $S$ in ${\mathcal{G}_1}$.\\
\textbf{Step 2}: Compute a unique term order $\prec$ such that the leading terms of polynomials in $S$ are mutually disjoint.\\
\textbf{Step 3}: With respect to $\prec$, test whether $\mathcal{F}$ is a Gr\"obner basis.\\
\textbf{Step 4}: If $\mathcal{F}$ is a Gr\"obner basis \textit{w.r.t} $\prec$, then return ("Yes" and term order $\prec$) else repeat Step 1.\\
\textbf{Step 5}: Return "No".\\

\noindent Correctness of the algorithm: Assume that the algorithm returns ``Yes" then for a particular $n$-subset $S$ of ${\mathcal{G}_1}$, the leading terms of all the polynomials in $S$ are mutually coprime. This can happen only if the leading terms of polynomials in $\mathcal{G}_1$ are pure powers. Also, in Step 4 we check whether $\mathcal{F}$ is a Gr\"obner basis \textit{w.r.t} $\prec$. Hence with respect to $\prec$, $\mathcal{F}$ is a Gr\"obner basis such that for each indeterminate $x_i$ there exists a polynomial such that the leading term of that polynomial is a pure power in $x_i$. In other words, there exists a term order such that $\mathcal{F}$ is a Gr\"obner basis with respect to that term order. 
\par Conversely assume $\mathcal{F}$ is a Gr\"obner basis of a zero dimensional ideal with respect to a term order $\prec$. Then there exists a subset $\{f_{i_1}, \ldots ,f_{i_n}\}$ of $\mathcal{F}$ such that $f_{i_j}=x_j^{\alpha_{ji_j}}$. For this subset $S$, a term order $\prec'$ is detected in Step 2 such that the leading terms of all the polynomials in $S$ are mutually coprime. For this subset $S$, finding a term order such that the leading terms of all the polynomials in $S$ are mutually coprime is equivalent to finding a permutation $\sigma$ which can be realized by a term order. Now, consider the following lemmas from~\cite{GBD}: first, let $F=\{f_1, \ldots ,f_n\}$ such that $f_i=X_1^{a_{i1}}+ \cdots +X_n^{a_{in}}$ for all $i \in \{1, \ldots ,n\}$.

\begin{lemma}
\label{existtermorder}
A permutation $\sigma$ cannot be realized by a term order if there is another permutation $\rho$ such that
$$\Pi_{i=1}^{n} a_{i \rho (i) } \geq  \Pi_{i=1}^{n} a_{i \sigma (i) } $$
\end{lemma}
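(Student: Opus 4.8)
The plan is to argue by contradiction using the weight-vector description of term orders. Suppose $\sigma$ \emph{is} realized by a term order, represented by a positive weight vector $w=(w_1,\ldots,w_n)\in\mathbb{R}^n_+$; by definition this means that for every $i\in\{1,\ldots,n\}$ the leading term of $f_i=X_1^{a_{i1}}+\cdots+X_n^{a_{in}}$ with respect to $w$ is $X_{\sigma(i)}^{a_{i\sigma(i)}}$, i.e.
$$w_{\sigma(i)}\,a_{i\sigma(i)} \;>\; w_j\,a_{ij}\qquad\text{for all }j\neq\sigma(i).$$

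First I would compare $\sigma$ with the given permutation $\rho$ index by index. For each $i$ the displayed inequality gives $w_{\sigma(i)}\,a_{i\sigma(i)}\geq w_{\rho(i)}\,a_{i\rho(i)}$, with equality when $\rho(i)=\sigma(i)$ and a \emph{strict} inequality when $\rho(i)\neq\sigma(i)$. Since $\rho$ is a permutation different from $\sigma$, there is at least one index at which it differs from $\sigma$, so at least one of these $n$ inequalities is strict. Multiplying all $n$ of them together --- all quantities being positive, so that strictness of a single factor forces strictness of the product --- yields
$$\prod_{i=1}^{n} w_{\sigma(i)}\,a_{i\sigma(i)} \;>\; \prod_{i=1}^{n} w_{\rho(i)}\,a_{i\rho(i)}.$$

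The second step is the cancellation: since $\sigma$ and $\rho$ both run over all of $\{1,\ldots,n\}$, we have $\prod_{i=1}^{n}w_{\sigma(i)}=\prod_{j=1}^{n}w_j=\prod_{i=1}^{n}w_{\rho(i)}$, and this common positive factor can be divided out of both sides, leaving
$$\prod_{i=1}^{n} a_{i\sigma(i)} \;>\; \prod_{i=1}^{n} a_{i\rho(i)}.$$
This contradicts the hypothesis $\prod_{i=1}^{n} a_{i\rho(i)}\geq\prod_{i=1}^{n} a_{i\sigma(i)}$, so no such $w$ can exist and $\sigma$ is not realizable by a term order.

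I do not anticipate a real obstacle. The only points needing care are the passage from the coordinatewise inequalities to a strict product inequality --- which relies on all exponents $a_{ij}$ (and all weights $w_j$) being positive, as they are in the setting where the lemma is applied --- and the elementary observation that the $w$-factors cancel because $\sigma$ and $\rho$ permute the same index set. If positivity of the exponents were not available one would first restrict attention to the terms that actually compete for the leading position, but this refinement is not needed here.
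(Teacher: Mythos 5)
Your proof is correct: the paper itself states this lemma without proof (it is imported from Sturmfels--Wiegelmann~\cite{GBD}), and your weight-vector argument --- termwise inequalities $w_{\sigma(i)}a_{i\sigma(i)} \geq w_{\rho(i)}a_{i\rho(i)}$ with at least one strict, multiplied over $i$ and with $\prod_i w_{\sigma(i)} = \prod_i w_{\rho(i)}$ cancelled --- is exactly the standard proof from that reference. Your closing caveat about needing all $a_{ij}$ (and $w_j$) strictly positive is the right one to flag, and it holds in the setting where the lemma is invoked.
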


\begin{lemma}
\label{detecttermorder}
For any monomial $X^{\beta_i} \neq X^{\alpha_i}$ occuring in $f_i$ consider the difference vector $\alpha_i-\beta_i$ and let $\Gamma$ be the matrix whose rows are all these vectors for all $i$. There exists a term order $w$ such that $LT_{w}(f_i)=X^{\alpha_i}$ for $i=1, \ldots ,n$ if and only if the linear system of inequalities $\Gamma w > 0$, $w > 0$ has a solution. Moreover if a solution exists, there also exists a solution of binary size which is polynomial in the binary size of the sparsely encoded input polynomials.   
\end{lemma}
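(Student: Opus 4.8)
The plan is to translate the condition ``$X^{\alpha_i}$ is the leading term of $f_i$ with respect to the term order $w$'' into the stated linear feasibility problem, using the representation of term orders by positive weight vectors recalled in the Preliminaries, and then to extract the size bound from standard linear programming theory. I would prove the two implications separately and handle the bit-size claim last. For the direction ``term order $\Rightarrow$ system solvable'': suppose a term order $\prec$ with $LT_{\prec}(f_i)=X^{\alpha_i}$ exists. Since $\prec$ restricted to the finitely many monomials occurring in $f_1,\ldots,f_n$ is represented by some $w\in\mathbb{R}^n_+$, for each $i$ and each monomial $X^{\beta_i}\neq X^{\alpha_i}$ appearing in $f_i$ we have $X^{\beta_i}\prec X^{\alpha_i}$, i.e.\ $w^T\beta_i<w^T\alpha_i$, i.e.\ $w^T(\alpha_i-\beta_i)>0$. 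Collecting these over all $i$ yields $\Gamma w>0$, and $w>0$ since $w$ is a positive weight vector; hence the system is feasible.

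For the converse, suppose $w$ satisfies $\Gamma w>0$ and $w>0$. A weight vector by itself only defines a partial preorder on monomials, so I would refine it: fix any term order $\prec_0$ (say lex) and set $X^{\gamma}\prec X^{\delta}$ iff $w^T\gamma<w^T\delta$, or $w^T\gamma=w^T\delta$ and $X^{\gamma}\prec_0 X^{\delta}$. Because $w>0$, this $\prec$ is a genuine term order (the constant monomial is the minimum, and compatibility with multiplication is inherited from both $w$ and $\prec_0$). Now, for each $i$ and each monomial $X^{\beta_i}\neq X^{\alpha_i}$ in $f_i$, the corresponding row of $\Gamma$ gives $w^T(\alpha_i-\beta_i)>0$, hence $w^T\beta_i<w^T\alpha_i$, so $X^{\beta_i}\prec X^{\alpha_i}$ already at the weight level and no tie-break is invoked. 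Thus $X^{\alpha_i}$ strictly dominates every other monomial of $f_i$, so $LT_{\prec}(f_i)=X^{\alpha_i}$ for all $i$, as required.

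For the size bound: if $\Gamma w>0$, $w>0$ has a rational solution, then clearing denominators and scaling by a suitable positive rational shows that the closed rational polyhedron $Q=\{w : \Gamma w\ge \mathbf{1},\ w\ge \mathbf{1}\}$ is nonempty. By the standard estimates on the vertex complexity of rational polyhedra, a nonempty polyhedron described by a constraint matrix and right-hand side of total bit-size $\phi$ contains a point whose bit-size is polynomial in $\phi$. Here the rows of $\Gamma$ are the exponent-difference vectors $\alpha_i-\beta_i$, each entry of which is written down explicitly in any sparse encoding of $f_1,\ldots,f_n$, so $\phi$ is polynomial in the sparse input size, and one obtains a solution $w$ of polynomially bounded binary size.

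The step I expect to require the most care is the converse direction: one must not conflate a weight vector with a term order, and the argument goes through precisely because the inequalities $\Gamma w>0$ are \emph{strict}, which is exactly what guarantees that the tie-breaking refinement cannot disturb any prescribed leading term. The size bound is then a black-box appeal to linear programming theory, the only thing to verify being that passing from strict to non-strict inequalities (by clearing denominators and scaling) does not blow up the encoding size.
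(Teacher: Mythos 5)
The paper itself gives no proof of this lemma: it is imported verbatim from the cited Sturmfels--Wiegelmann reference, so there is no in-paper argument to compare yours against. Your proof is correct and is the standard one for this statement: forward direction by reading off the weight inequalities, converse by refining the weight vector with an arbitrary tie-breaking term order (and you correctly identify that the strictness of $\Gamma w>0$ is what makes the tie-break harmless), and the size bound by passing to the closed polyhedron $\{w:\Gamma w\ge\mathbf{1},\ w\ge\mathbf{1}\}$, which lies in the positive orthant, hence is pointed and has a vertex of bit-size polynomial in the encoding of $\Gamma$, whose rows are read directly off the sparse exponent vectors. The one point that deserves more than a citation of the Preliminaries is the forward direction: the paper's blanket claim that every term order is represented by a single positive weight vector is false as stated (lex on two variables is a counterexample), and what you actually use --- correctly --- is the weaker fact that a term order restricted to a finite set of monomials is so representable. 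That finite-representability is itself a small theorem: if no positive $w$ satisfied the finitely many strict inequalities, then by Gordan's theorem some nonnegative integral combination of the difference vectors $\alpha_i-\beta_i$ would be componentwise $\le 0$, which contradicts multiplicativity together with the divisibility axiom of a term order. In a self-contained write-up you should either cite this fact properly or include that two-line argument rather than lean on the Preliminaries' overstated version.
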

 
From Lemma \ref{detecttermorder}, a term order $\prec$ is detected in Step 2 such that the leading terms of all the polynomials in $S$ are mutually coprime and it is unique from Lemma \ref{existtermorder}. Since $S$ is a $n$-subset, all the leading terms of $S$ are pure powers such that no two leading terms are pure powers of the same indeterminate. Consequently in Step 4, $\mathcal{F}$ is verified to be a Gr\"obner basis with respect to $\prec$ and hence returns "Yes" contradicting the assumption that the algorithm returned "No".\\   
\noindent Analysis of running time of the algorithm: Step 2 and 3 take $f(n)$ time where $f(n)$ is a polynomial in $n$. But the number of iterations of the algorithm is equal to number of all possible $n$-subsets of ${\mathcal{F}_1}$. Hence, the number of iterations can be upper bounded by $s^n$. Hence, running time of the algorithm is $O({s^n}f(n))$. Note that if the number of indeterminates was a constant then the algorithm runs in time polynomial in the number of input polynomials.   

\section{Concluding remarks}
In this paper, we analyze the complexity of the Gr\"obner basis detection problem for the case of zero-dimensional ideals and show that the problem is NP-hard. We also propose an algorithm to solve the GBD problem for the zero-dimensional case which runs in polynomial time if the number of indeterminates is a constant.

\end{document}